\newtheorem{thm}{Theorem}
\newtheorem{defn}{Definition}
\newtheorem{question}{Question}
\newtheorem{ass}{Assumption}
\title{Robust Multi-agent Counterfactual Prediction}
\author{%
  Alexander Peysakhovich\thanks{Equal contribution, author order has been randomized.} \\
  Facebook AI Research
  \And
  Christian Kroer\footnotemark[1] \\
  Facebook Core Data Science
  \And
  Adam Lerer\footnotemark[1] \\
  Facebook AI Research
}
\begin{document}

\maketitle

\begin{abstract}
We consider the problem of using logged data to make predictions about what would happen if we changed the `rules of the game' in a multi-agent system. This task is difficult because in many cases we observe actions individuals take but not their private information or their full reward functions. In addition, agents are strategic, so when the rules change, they will also change their actions. Existing methods (e.g. structural estimation, inverse reinforcement learning) make counterfactual predictions by constructing a model of the game, adding the assumption that agents' behavior comes from optimizing given some goals, and then inverting observed actions to learn agent's underlying utility function (a.k.a. type). Once the agent types are known, making counterfactual predictions amounts to solving for the equilibrium of the counterfactual environment. This approach imposes heavy assumptions such as rationality of the agents being observed, correctness of the analyst's model of the environment/parametric form of the agents' utility functions, and various other conditions to make point identification possible. We propose a method for analyzing the sensitivity of counterfactual conclusions to violations of these assumptions. We refer to this method as robust multi-agent counterfactual prediction (RMAC).  We apply our technique to investigating the robustness of counterfactual claims for classic environments in market design: auctions, school choice, and social choice. Importantly, we show RMAC can be used in regimes where point identification is impossible (e.g. those which have multiple equilibria or non-injective maps from type distributions to outcomes).
\end{abstract}

\section{Introduction}
All markets have rules and some rules work better than others \citep{roth1999redesign,roth2005kidney,abdulkadirouglu2005boston,klemperer2002really,porter2003combinatorial}. Figuring out which rules yield good outcomes is the bread and butter of market design, an interdisciplinary field focused on the engineering of effective rules \citep{roth2002economist}. Good market design is particularly important for businesses which make their livelihoods as platforms (e.g. internet ad auctions, ride sharing, dating sites). A key challenge for market designers is to observe an existing set of rules at work and make a counterfactual statement about how outcomes would change if the rules changed \citep{bottou2013counterfactual,athey2015machine}.

The multi-agent counterfactual question is difficult for two reasons. First, participants are strategic. An agent's optimal action can change due to changes in the rules of the game, and often, can change when other agents change what they are doing. Second, agents have private information that is not known to the designer so even knowledge of the rules, and ability to compute optimal actions, is insufficient to estimate counterfactual outcomes. For example, if we observed data from a series of first-price sealed bid auctions, we could not assume that agents would continue to bid the same way if we changed the auction format to second price with a reserve.

The technique of structural estimation deals with these issues by assuming that observed actions are coming from a multi-agent system that, through repetition or other forces, has come to equilibrium. Further, it is often assumed that once changes are made, the system will again equilibriate. This means the counterfactual question becomes asking about how equilibria change as we make design changes. A downside of the standard structural approach is that it requires strong assumptions that are not always completely true in practice. For example, this process requires assuming that agents are optimizing their utility given the behavior of others so that an analyst can infer underlying `taste' parameters from agent actions \citep{berry1995automobile,athey2010structural}. It is well known, however, that human decisions do not always obey the axioms of utility maximization \citep{camerer2011advances} and that both mistakes and biases can persist even when there is ample opportunity for learning \citep{erev1998predicting, fudenberg2016recency}. 

Our main contribution is to propose a method which allows an analyst to see how robust their counterfactual conclusions are to relaxations of the assumptions of rationality and correct specification of the model. We first show that the counterfactual estimation problem can be written as a game which we call a revelation game. When the standard assumptions are satisfied the revelation game has a unique equilibrium. Looking at the set of $\epsilon$-equilibria of the revelation game is equivalent to relaxing these assumptions.

To apply this idea in practice we need to solve for particular equilibria of the revelation game - the `worst' and `best' elements of the $\epsilon$-equilibrium set with respect to some evaluation function (e.g. revenue). These equilibria form the upper and lower bounds for our robust multi-agent counterfactual prediction (RMAC). Varying $\epsilon$ gives the analyst a measure of how confident they can be in their inferences as they relax how strictly the standard assumptions hold. In addition, RMAC can be applied even when standard assumptions about point identification do not hold (e.g. when there are multiple equilibria or when the data is consistent with multiple type distributions) to compute optimistic and pessimistic counterfactual predictions.

The RMAC bounds are different from standard uncertainty bounds (e.g. the standard error of a maximum likelihood estimator). Statistical uncertainty bounds (i.e. standard errors) reflect variance introduced by access to only finite data but still assume the underlying model is completely correct. On the other hand, our robustness bounds are intended to measure error that can come from the analyst using a model that is precisely incorrect but approximately true.

We show that computing the RMAC bounds exactly is a difficult problem as it is NP-hard even for $2$-player Bayesian games. We propose a first-order method based on fictitious play applied to the revelation game which we refer to as revelation game fictitious play (RFP) to compute the RMAC bounds. 

We apply RFP to generate RMAC in three domains of interest: auctions, matching, and social choice. In each of them we find that some counterfactual predictions are much more robust than others. Variation in even these simple cases suggests that RMAC can be a useful addition to the toolbox of structural estimation.

\subsection{Related Work}
Our work is closely related to the notion of partial identification \citep{manski2003partial}. The main idea behind partial identification is that many statistical models are only able to recover a set of parameters consistent with the data, not a single point estimate. The PI literature focuses on models where this `identified set' can be extracted easily. The adversarial revelation game is strongly related in that the equilibrium relaxation we employ makes the counterfactual predictions a set rather than a point. Our optimization procedure finds this set's worst (in terms of some evaluation function) and best elements and returns them.

Existing work in the field of market design has used econometric techniques to estimate counterfactuals in specific applications. Some existing work focuses on the econometrics of auctions and deriving underlying valuations (types) from bid behavior \citep{athey2010structural,chawla2017mechanism} or payment profiles. \cite{agarwal2015empirical} focuses on using data from medical residency matches to infer the underlying benefit to a young doctor from a particular residency. These approaches are, like ours, designed with the goal of answering counterfactual questions. However, while they allow for measures of statistical uncertainty they do not allow analysts to check for robustness of conclusions to violations of assumptions. \cite{haile2003inference} consider using `incomplete' models of auctions to provide some form of robustness but, like much of the literature on the econometrics of auctions (and unlike RMAC), requires hand-deriving estimators specifically tailored to the auction at hand.

Since the pioneering work of \cite{myerson1981optimal} there is a large subfield of game theory dedicated to designing mechanisms that optimize some quantity (e.g. seller revenue). Myerson-style results are useful because they give closed form solutions to optimal auction design, however, this comes at a high informational cost. For example, they often require the auctioneer to know the distribution of types (valuations) in the population. These strong assumptions are relaxed in robust mechanism design \citep{bergemann2005robust}, automated mechanism design \citep{conitzer2002complexity}, and recent work in using deep learning methods to approximate optimal mechanisms \citep{dutting2017optimal,feng2018deep}. Optimal mechanism design is related to, but different from, the RMAC problem as it typically assumes access to at least some direct information about the distribution of types, whereas our main problem is to robustly infer the underlying types from observed actions. However, these problems are related and combining insights from these literatures with RMAC is an interesting direction for future work.

There is recent interest in relaxing equilibrium assumptions in structural models. For example \cite{nekipelov2015econometrics} consider replacing equilibrium assumptions with the assumption that individuals are no-regret learners. This, again, gives a set valued solution concept which can be worked out explicitly for the special case of auctions. Given the prominence of no-regret learning in algorithmic game theory a natural extension of the work in this paper is to consider expanding RMAC to learning as a solution concept.

\section{Bayesian Games}
We consider the standard one-shot Bayesian game setup. There are $N$ players which each have a type $\theta_i \in \Theta$ drawn from an unknown distribution $\mathcal{F}$. This type is assumed to represent their preferences and private information. For example, in the case of auctions this type describes the valuations of each player for each object.

\begin{defn}
A game $\mathcal{G}$ has a set of actions for each player $\mathcal{A}_i$ with generic element $a_i$. After each player chooses their action, the players receive utilities given by $u^{\mathcal{G}}_i (a_1, \dots, a_N, \theta_i)$. 
\end{defn}

We will be interested in systems that come to a stable state and we will use the concept of Bayesian Nash equilibrium. We denote a strategy $\sigma_i$ for player $i$ in game $\mathcal{G}$ as a mapping which takes as input $\theta_i$ and outputs an action $a_i$. As standard for a vector $x$ of variables (strategies, actions, types, etc...), one for each player, we let $x_i$ be the variable for player $i$ and $x_{-i}$ be the vector for everyone other than $i$.

\begin{defn}
A Bayesian Nash equilibrium is a strategy profile $\sigma^{*}$ such that for each player $i$, all possible types $\theta_i$ for that player which have positive probability under $\mathcal{F}$, and any other strategy $\sigma'_i$ we have $$\mathbb{E}_{\mathcal{F}} \big[ u^{\mathcal{G}}_i (\sigma^{*}_i (\theta_i), \sigma^*_{-i} (\theta_{-i}), \theta_i) \big] \geq \mathbb{E}_{\mathcal{F}} \big[ u^{\mathcal{G}}_i (\sigma'_i (\theta_i), \sigma^*_{-i} (\theta_{-i}), \theta_i) \big].$$
\end{defn}

The Bayesian Nash equilibrium (BNE) assumption can be motivated by, for example, assuming that repeated play (with rematching) have led learning agents to converge to such a state \citep{fudenberg1998theory,dekel2004learning,hartline2015no}. Importantly, BNE states that players' actions are optimal given the distribution of partners they could play, not necessarily that they are optimal at each realization of the game with types fixed.

For the purposes of lightening notation from here on we will deal with games where every player's action set is the same $\mathcal{A}_i = \mathcal{A}$  and every players' type is drawn iid from $\mathcal{F}$.

\section{The Revelation Game as a Counterfactual Estimator}
Given the formal setup above, we now turn to answering our main question:

\begin{question}
Suppose we have a dataset $\mathcal{D}$ of actions played in $\mathcal{G}$. What can we say about what would happen if we changed the underlying game to $\mathcal{G}'$?
\end{question}

Formally, when we say that we change the game to $\mathcal{G}'$ we mean that the action set changes to $\mathcal{A}'$ and the utility functions change to $u^{\mathcal{G'}}_i (a_1, \dots, a_N, \theta_i).$ $\mathcal{G}'$ remains a Bayesian game so the definitions and notation above continue to apply. 

As a concrete example: in the case of online advertising auctions, $\mathcal{D}$ will contain a series of auctions with bids taken by different participants. We may wish to ask, what would happen if we changed the auction format?

We now discuss a set of assumptions typically made either implicitly or explicitly when analysts apply equilibrium based structural models to estimate a counterfactual:

\begin{ass}[Equilibrium]
Data is drawn from a BNE of $\mathcal{G}$ and play in $\mathcal{G}'$ will form a BNE.
\end{ass}

\begin{ass}[Identification]
For any possible distribution of types $\mathcal{F}$ and associated BNE $\sigma^*$ there does not exist another distribution of types $\mathcal{F}'$ and BNE $\sigma'^*$ that induces the same distribution of actions.
\end{ass}

\begin{ass}[Uniqueness in $\mathcal{G}'$]
Given $\mathcal{F}$ there is a unique BNE in $\mathcal{G}'.$
\end{ass}

If the assumptions are satisfied then we can use $\mathcal{D}$ to answer the counterfactual question. By Assumption 1 each action $d_i$ is optimal against the distribution of actions implied by $\mathcal{D}.$ If $\mathcal{D}$ is large enough then it approximates the true distribution implied by $\sigma$ and $\mathcal{F}$. By Assumption 2, there is a unique $\sigma$ and $\mathcal{F}$ that give rise to this distribution and we can use various methods to find them. Once we have $\mathcal{F}$ we can solve for the equilibrium in $\mathcal{G}'$, which is unique by Assumption 3, using any number of methods and we are done.

We now show this procedure is equivalent to solving for the Nash equilibrium in a modified game which we refer to as a \textit{revelation game}.\footnote{We are indebted to Jason Hartline who pointed out in an earlier versions of this work that our optimization problem can be thought of as equilibrium finding and thus make exposition much simpler.} We do not consider that agents will actually play this game, rather we will show that this proxy game is a useful abstraction for doing robust counterfactual inference.

The revelation game has $m$ players, one for each element of $\mathcal{D}$. We refer to these as data-players to avoid confusion with the players in $\mathcal{G}$ and $\mathcal{G}'$. Each data-player knows that the analyst has a random variable $\mathcal{D}$ of actions from the equilibrium of $\mathcal{G}$. $\mathcal{D}$ includes the data-player's own true equilibrium action but the other actions are ex-ante unknown. Each data-player has a true type $\theta_j$ which is unknown to the analyst, the types of the other data-players $-j$ are unknown to $j$ but it is commonly known that they are drawn from the distribution $\mathcal{F}.$

Each data-player $j$ makes a decision: they report a type $\hat{\theta}_j$ and an action for the counterfactual game $\hat{a}_j.$ They are paid as follows: first, let the $\mathcal{D}_{-j}$ denote the random variable which denotes the actions of the other data-players the analyst will observe. Now we define the $\mathcal{G}$-Regret of data-player $j$ as $$\text{Regret}^{\mathcal{G}}_{j} (\hat{\theta}_j, \mathcal{D}_{-j}) =  \text{max}_{a_j} \mathbb{E} \big[ u^{\mathcal{G}}_j (a_j, \hat{\theta}_j, \mathcal{D}_{-j}) \big] - \mathbb{E} \big[ u^{\mathcal{G}}_j (d_j, \hat{\theta}_j, \mathcal{D}_{-j}) \big].$$

We define the $\mathcal{G}'-$Regret of data-player $j$ as $$\text{Regret}^{\mathcal{G}'}_{j} (\hat{a}_j, \hat{\theta}_j, \hat{a}_{-j}) = \text{max}_{a_j}  \big[ u^{\mathcal{G}'}_j (a_j, \hat{\theta}_j, {\hat{a}_{-j}}) \big] - \big[ u^{\mathcal{G}'}_j (\hat{a}_j, \hat{\theta}_j, \hat{a}_{-j}) \big].$$

The revelation game is a Bayesian game where each data-player $j$ tries to minimize a loss given by the max of the two above regrets: $$\mathcal{L}^{rev}_j (\hat{\theta}_j, \hat{a}_j, \hat{a}_{-j}, \mathcal{D}) = \max \lbrace \text{Regret}^{\mathcal{G}}_{j} (d_j, \hat{\theta}_j, \mathcal{D}), \text{Regret}^{\mathcal{G}'}_{j} (\hat{a}_j, \hat{\theta}_j, \hat{a}_{-j}) \rbrace.$$

Since most game theory definitions (e.g. equilibria) use utility maximization, rather than loss minimization, we will also sometimes use the notation $$\mathcal{U}^{rev}_j (\hat{\theta}_j, \hat{a}_j, \hat{a}_{-j}, \mathcal{D}) = -\mathcal{L}^{rev}_j (\hat{\theta}_j, \hat{a}_j, \hat{a}_{-j}, \mathcal{D}).$$

Given these definitions, we can show the following property: 

\begin{thm}\label{revgame}
If assumptions 1-3 are satisfied then the revelation game has a unique BNE where each agent reveals their true type and counterfactual action.
\end{thm}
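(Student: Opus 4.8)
The plan is to split the statement into an existence claim (the truthful profile is a BNE) and a uniqueness claim (it is the only one), exploiting throughout the single structural fact that each loss $\mathcal{L}^{rev}_j$ is a maximum of two regret terms, each non-negative by construction. Consequently every data-player's loss is bounded below by $0$, so the revelation game is ``solved'' by any profile that simultaneously drives both regrets to zero for all players; the whole argument reduces to showing that exactly one such profile exists and that it coincides with truthful reporting.

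For existence I would first use Assumption 1: because $\mathcal{D}$ is drawn from a BNE of $\mathcal{G}$, the observed action $d_j$ is a best response, under the true type $\theta_j$, to the true action distribution $\mathcal{D}_{-j}$, so reporting $\hat\theta_j=\theta_j$ makes $\text{Regret}^{\mathcal{G}}_j=0$. Here I would stress that $\text{Regret}^{\mathcal{G}}_j$ depends only on the data-player's own reported type and the fixed data, not on the other data-players' reports. Next, invoking Assumption 3, let $\sigma^{*\mathcal{G}'}$ be the unique BNE of $\mathcal{G}'$ under $\mathcal{F}$; if every data-player reports truthfully and plays $\hat a_j=\sigma^{*\mathcal{G}'}_j(\theta_j)$, then each $\hat a_j$ best-responds to the others' counterfactual reports and $\text{Regret}^{\mathcal{G}'}_j=0$. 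Both regrets vanish for every player and type, so expected loss is $0$, the global minimum; no deviation can improve it, and the truthful profile is a BNE.

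For uniqueness I would show that any BNE must also attain zero loss for (almost) every player and type, and then read off the profile. Zero loss forces $\text{Regret}^{\mathcal{G}}_j=0$ for all players, i.e.\ the observed actions $\mathcal{D}$ constitute a BNE of $\mathcal{G}$ under the reported-type distribution $\hat{\mathcal{F}}$; since $\mathcal{D}$ is also generated by the true pair $(\mathcal{F},\sigma^*)$, Assumption 2 forces $\hat{\mathcal{F}}=\mathcal{F}$ and the reported best-response map to agree with the true one, pinning each reported type to its true value $\hat\theta_j=\theta_j$. With types revealed truthfully, zero $\text{Regret}^{\mathcal{G}'}_j$ says the reported counterfactual actions form a BNE of $\mathcal{G}'$ under $\mathcal{F}$, and Assumption 3 makes this profile unique, so $\hat a_j=\sigma^{*\mathcal{G}'}_j(\theta_j)$, identifying the BNE with the truthful profile.

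The main obstacle is the uniqueness direction, specifically the step asserting that every BNE attains zero loss. The difficulty is the coupling inside the $\max$: the reported type $\hat\theta_j$ enters both regret terms, so a data-player cannot zero $\text{Regret}^{\mathcal{G}}_j$ without perturbing $\text{Regret}^{\mathcal{G}'}_j$, and driving both to zero is a joint fixed-point condition rather than a unilateral one. I would handle this by arguing that truthful reporting is the unique minimizer of $\text{Regret}^{\mathcal{G}}_j$ (via Assumption 2) and that, given truthful types, best-responding in $\mathcal{G}'$ is unilaterally optimal, so the two components decouple at any candidate equilibrium. A secondary subtlety to address carefully is that $\text{Regret}^{\mathcal{G}'}_j$ is evaluated against the realized reports $\hat a_{-j}$ rather than in expectation, so I would make precise the sense in which playing $\sigma^{*\mathcal{G}'}$ yields zero regret along the equilibrium path, and confirm the translation of the distributional Identification assumption into pointwise truth-telling, appealing if necessary to injectivity of the equilibrium strategy so that equal action distributions entail equal type reports player-by-player.
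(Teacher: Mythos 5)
Your proposal is correct and follows essentially the same route as the paper's proof: zero loss is unilaterally attainable (truthful type zeroes the $\mathcal{G}$-regret, best-responding in $\mathcal{G}'$ zeroes the other), so any BNE has zero loss; Identification then rules out false type reports and Uniqueness pins down the counterfactual action. Your treatment is somewhat more explicit than the paper's about the existence half and about why the two regret terms decouple, but the underlying argument is the same.
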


We leave the proof of the theorem to the Appendix. With this result in hand, we now discuss how to modify the revelation game to make our counterfactual predictions robust.

\section{Robust Multi-agent Counterfactual Inference}
In reality, assumptions 1-3 above are rarely satisfied exactly and we would like to see how robust our conclusions are to violations of our assumptions. In particular, we are interested in allowing agents to not be perfectly rational, not requiring identification to hold strictly, and allowing for our model of agents' reward functions to be misspecified. In addition, all modeling makes the important assumption

\begin{ass}[Specification]
$\mathcal{G}$ and $\mathcal{G}'$ include the correct specifications of individuals' reward functions.
\end{ass}

which, like the others, is rarely completely true in practice.

To relax all of these assumptions we will consider the concept of $\epsilon$-equilibrium:

\begin{defn}
For $\epsilon > 0$ an $\epsilon$-Bayesian Nash equilibrium is a strategy $\sigma^{*}$ such that for each player $i$, all possible types $\theta_i$ for that player which have positive probability under $\mathcal{F}$, and any other strategy $\sigma'_i$ we have $$\mathbb{E}_{\mathcal{F}} \big[ u_i (\sigma^{*}_i (\theta_i), \sigma^*_{-i} (\theta_{-i}), \theta_i) \big] \geq \mathbb{E}_{\mathcal{F}} \big[ u_i (\sigma'_i (\theta_i), \sigma^*_{-i} (\theta_{-i}), \theta_i) + \epsilon \big].$$
\end{defn}
	
Allowing for $\epsilon$-BNE in the revelation game means that we are also allowing for $\epsilon$-BNE in $\mathcal{G}$ and $\mathcal{G}'$. Using this formulation is equivalent to relaxing assumptions about rationality or correct specification in our structural models. $\epsilon$-equilibria can arise because agents are imperfect optimizers (but are able to learn to avoid actions that cause huge negative regret) or because the utility functions in $\mathcal{G}$ or $\mathcal{G}'$ are slightly incorrect (and individuals reach an equilibrium corresponding to some other reward function).

However, like many instances of partial identification \cite{manski2003partial} $\epsilon$-BNE is a set valued solution concept. Rather than enumerate the whole set, we will consider particular boundary equilibria:

We assume the existence of an evaluation function $V(\theta, a)$  which gives us a scalar evaluation of the counterfactual outcome that the analyst cares about. We overload notation and let $V(\sigma)=\mathbb{E}_{(\theta, a) \sim \sigma}V(\theta, a)$ be the expected value of $V$ given a mixed strategy $\sigma$. Common examples of valuation function used in the mechanism design literature include revenue, efficiency, fairness, envy, stability, strategy-proofness, or some combination of them \citep{roth1992two,guruswami2005profit,budish2011combinatorial,caragiannis2016unreasonable}. 

We will consider the maximal and minimal elements of the $\epsilon$-BNE set with respect to $V$. Formally:

\begin{defn}
  The $\epsilon$-pessimistic counterfactual prediction of $V$ is
  \[
    \inf_{\sigma} V(\sigma) \text{ s.t. }\sigma \text{ is an $\epsilon$-BNE in the revelation game.}
  \]
  The $\epsilon$-optimistic prediction replaces the inf with sup. The $\epsilon$-RMAC bounds are the values of $V$ attained at the pessimistic and optimistic predictions.
\end{defn}
 
 Figure \ref{rmac_summary} summarizes the idea behind RMAC. Standard structural imply a one-to-one mapping between observed distributions and underlying types followed by a one-to-one mapping between underlying types and counterfactual behavior. Assuming only $\epsilon$-equilibrium makes both of these mappings one-to-many and RMAC bounds select the most optimistic and pessimistic counterfactual distributions consistent with these mappings.
 
 \begin{figure*}[ht!]
  \centering
   \includegraphics[scale=.5]{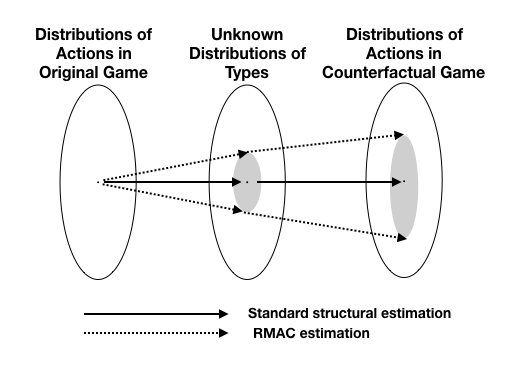}
   \caption{Standard structural assumptions allow us to map an observed distribution of equilibrium actions into an underlying distribution of types and then map this distribution of types into a distribution of counterfactual actions. This is represented by points and solid arrows. Assuming $\epsilon$-equilibrium in the original game means we now map a distribution of actions to many possible distributions of types. Assuming $\epsilon$-equilibrium in the counterfactual game implies another one-to-many mapping from underlying type distributions to counterfactual action distributions. RMAC bounds, represented by the dashed arrows, are the most optimistic and pessimistic choices of counterfactual actions (with respect to some evaluation function $V$) from this set.}
  \label{rmac_summary}
\end{figure*}

\section{Computing Equilibria of the Revelation Game}\label{MIPsection}
In practice, we can replace the random variable $\mathcal{D}$ of the revelation game with their sample analogue, the observed data. From here forward $\mathcal{D}$ will refer to the sample data. Unfortunately, we can derive a quite negative complexity result for computing $\epsilon$-RMAC bounds exactly:

\begin{thm}\label{rfpishard}
It is NP-hard to compute the robust counterfactual estimate even if each data-point $j$ has only a single feasible type, and there are only two data points. It is also NP-hard even if there is no objective function,  a finite number of feasible types, and $\mathcal{G}'$ has only two players.
\end{thm}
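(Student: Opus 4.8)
The plan is to prove each of the two NP-hardness claims by a separate polynomial-time reduction, with each reduction isolating a different source of difficulty. The first claim concerns the hardness of \emph{selecting} among the equilibria of $\mathcal{G}'$, while the second concerns the hardness of \emph{rationalizing} the observed data by a consistent assignment of types. In both cases the strategy is to build a revelation-game instance in which the $\epsilon$-BNE conditions collapse onto a known NP-complete problem, and then argue that the value of the RMAC estimate decides that problem.

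For the first claim I would reduce from the problem of computing a welfare- or revenue-maximizing Nash equilibrium of a two-player normal-form game, shown NP-hard by Gilboa and Zemel and sharpened by Conitzer and Sandholm. Given such a bimatrix game $B$ with target value $k$, I construct a revelation game with exactly two data-players, each assigned a single feasible type, so that the reports $\hat{\theta}_j$ are forced and $\text{Regret}^{\mathcal{G}}_j$ carries no combinatorial content. I choose $\mathcal{G}$ so that the recorded action incurs zero $\mathcal{G}$-regret (for instance by making all actions payoff-equivalent in $\mathcal{G}$, so the observed $d_j$ is trivially optimal), which renders the $\mathcal{G}$-regret constraint vacuous. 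Setting $\mathcal{G}' = B$, $\epsilon = 0$, and letting $V$ be the target objective, the $\epsilon$-BNE of the revelation game coincide exactly with the Nash equilibria of $B$, so the optimistic (respectively pessimistic) RMAC bound equals the maximum (respectively minimum) of $V$ over those equilibria. Deciding whether the bound reaches $k$ then decides the source problem, giving NP-hardness; small positive $\epsilon$ is handled by scaling the payoff gaps of $B$.

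For the second claim I would reduce from \textsc{3-Sat}, exploiting the finite menu of feasible types. Each data-player is given two feasible types playing the roles of \emph{true} and \emph{false} for one boolean variable, and the $\mathcal{G}$-regret constraint is used as a consistency check that admits a report only when the chosen type rationalizes that data-player's recorded action. The two-player counterfactual game $\mathcal{G}'$, together with the $\mathcal{G}'$-regret constraint, is engineered through clause gadgets so that a profile in which every data-player has loss at most $\epsilon$ exists if and only if the induced assignment satisfies every clause; an unsatisfied clause is made to force some data-player's $\text{Regret}^{\mathcal{G}'}_j$ above $\epsilon$. Because there is no objective, the RMAC computation reduces to deciding whether such a consistent low-loss profile exists at all, which is exactly satisfiability.

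I expect the second reduction to be the main obstacle. The binding restriction is that $\mathcal{G}'$ has only two players, so all clause-checking must be funneled through a single two-party interaction even though the boolean assignment is distributed across many data-players; the delicate part is designing gadgets that simultaneously (i) couple each data-player's discrete type choice to its counterfactual action and (ii) make a two-player $\epsilon$-equilibrium impossible precisely when some clause is violated, all with payoffs of polynomial size. A subtlety shared by both reductions, and one I would verify carefully, is that the $\epsilon$ thresholds and payoff gaps can be chosen so that the approximate $\epsilon$-equilibrium conditions coincide with the intended exact combinatorial conditions; pinning down the precise decision version of ``computing the estimate'' that carries the hardness is itself part of the argument.
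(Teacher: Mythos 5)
Your first reduction is essentially the paper's: the paper also reduces from computing a welfare-optimal Nash equilibrium of a two-player game \citep{conitzer2008new}, pins each data point to a single feasible type by making all other types incur positive $\mathcal{G}$-regret, sets $\mathcal{G}'$ equal to the source game and $V$ to (negative) social welfare, and observes that with $\epsilon=0$ the revelation-game equilibria coincide with the Nash equilibria of the source game. That part of your proposal is fine.

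Your second reduction, however, has a genuine gap, and it is precisely the part you flag as ``the main obstacle.'' You propose to build 3-SAT clause gadgets so that a two-player $\epsilon$-equilibrium of $\mathcal{G}'$ \emph{fails to exist} exactly when the formula is unsatisfiable, but you never construct the gadgets, and the plan as stated cannot work without a further restriction: the revelation game with finite types and actions is a finite Bayesian game, so a \emph{mixed} $\epsilon$-BNE always exists, and the feasibility question you want to make hard is trivially ``yes'' unless you restrict attention to pure-strategy equilibria. The paper avoids building any gadgets by reducing from a problem where this distinction is already baked in: deciding whether a two-player symmetric Bayesian game has a \emph{pure-strategy} BNE, which is NP-complete \citep{conitzer2008new}. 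It instantiates one data point per Bayesian type (each with that single feasible type), so the empirical type distribution reproduces the source game's type distribution and the pure-strategy equilibria are in bijection. To repair your argument you would either need to (i) state explicitly that the hardness is for pure-strategy (or otherwise restricted) solutions and then actually carry out the clause-gadget construction --- which amounts to reproving the Conitzer--Sandholm result --- or (ii) do what the paper does and cite it.
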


The proof follows from the reduction of solving the revelation game to other known NP-hard problems and we leave it to the Appendix. Importantly, NP-hard does not mean impossible and the RMAC bounds can be computed for small instances using a mathematical program. We give this program in the Appendix for the case where we only consider pure-strategy $\epsilon$-BNE. In addition, we show that for the special case where $\mathcal{G}'$ is a two player game, we can solve for the RMAC bounds using a mixed integer program.

\subsection{Revelation Game Fictitious Play}
Given that computing RMAC programs is intractable beyond the simplest cases, we propose to adapt the fictitious play algorithm \cite{brown1951iterative} to compute the optimistic and pessimistic equilibria of the revelation game. We refer to this as Revelation Game Fictitious Play (RFP). 

RFP works as follows. For notation, let $\hat{\theta}^t_{i}$ be the estimated type for data point $i$ at iteration $t$ and $\hat{a}^t_{i}$ be the estimated counterfactual action at iteration $t.$ Recall the definition of the revelation game utility as $$\mathcal{U}^{rev}_i (\hat{\theta}_i, \hat{a}_i, \hat{a}_{-i}, \mathcal{D}) = -\max \lbrace \text{Regret}^{\mathcal{G}}_{i} (d_i, \hat{\theta}_i, \mathcal{D}), \text{Regret}^{\mathcal{G}'}_{i} (\hat{a}_i, \hat{\theta}_i, \hat{a}_{-i}) \rbrace.$$ 

As with standard fictitious play, at each time step each $i$ reports a type-action pair. They observe the choices of others and update their $t+1$ choice to be $(\hat{\theta}^{t+1}_i, \hat{a}^{t+1}_i)$ to be the one that minimizes (or maximizes) $V$ out of the set of $\epsilon$ best responses to the current history of play (when $\epsilon=0$ RFP simply chooses the best response to the current history, breaking ties randomly). The pseudocode is shown in algorithm 1.

\begin{algorithm}\label{algRFP}
\caption{Revelation Fictitious Play}
\begin{algorithmic}
\State \textbf{Input:} $\epsilon, \mathcal{D}, V, \mathcal{G}, \mathcal{G}'$, pessimism/optimism
\If{pessimistic}
\State $\alpha \gets -1$
\EndIf
\If{optimisic}
\State $\alpha \gets 1$
\EndIf
\State Randomly initialize $\hat{\theta}^0_{i}, \hat{a}^0_{i}$
\While {not converged}
\State Let $\bar{a}^t_{-i}$ be the historical distribution of $\hat{a}^t_{-i}$ for $t \in \lbrace 0, \dots, t \rbrace$
\State Let $\bar{\theta}^t_{-i}$ be the historical distribution of $\hat{\theta}^t_{-i}$ for $t \in \lbrace 0, \dots, t \rbrace$
\State Let $\bar{\sigma}^t_{-i}$ be the strategy defined by the historical distribution of $(\hat{\theta}^t_{-i}, \hat{a}^t_{-i})$ for $t \in \lbrace 0, \dots, t \rbrace$
\State Let the set of low-regret revelation game actions be $$\hat{\mathcal{C}}_i^t = \lbrace (\hat{\theta}, \hat{a}) \in \mathcal{A} \times \Omega \mid \mathcal{L}^{rev}_i (\hat{\theta}^t_{i}, \hat{a}^t_{i}, \bar{a}^t_{-i}, \mathcal{D}) \leq \epsilon \rbrace$$
\State Breaking ties randomly, update guesses for each datapoint
\[
  (\hat{\theta}_{i}^{t+1}, \hat{a}_{i}^{t+1}) = \text{argmax}_{\hat{a}, \hat{\theta} \in \hat{\mathcal{C}}^t_{i}} \left[ \alpha V(\hat{a}_i, \hat{\theta}_i, \bar{\sigma}^t_{-i}) \right].
\]
\EndWhile
\end{algorithmic}
\end{algorithm}

It is well-known that fictitious play converges in $2$-player zero-sum and potential games, while it may cycle in general. Nonetheless, a well-known result states that \emph{if} fictitious play converges, then it converges to a Nash equilibrium \citep{fudenberg1998theory}. 

We now show an analogous result for RFP: if RFP converges then it converges to an $\epsilon$-BNE and locally minimizes $V$ in the sense that no unilateral deviation by a single data-player $j$ in the revelation game that are \emph{strictly} $\epsilon$-best responses leads to a smaller $V$. 

Recall that we use the notation $(\bar{\theta}^t, \bar{a}^t)$ to denote a \textit{history} of behavior. We denote by $\bar{\sigma}^t$ the \textit{mixed strategy}  implied by that history. As with standard fictitious play we consider convergence of $\bar{\sigma}^t$:

\begin{defn}
  RFP \emph{converges} to a mixed strategy $\sigma^*$ if
  $$
    \lim_{t \rightarrow \infty} \bar{\sigma}^t  = \sigma^* 
  $$
\end{defn}


We use the following notion of local optimality (analogously defined for optimistic V):
\begin{defn}
  A mixed $\epsilon$-BNE $\sigma^*$ of the revelation game is \emph{locally $V$-optimal} if
  \[
    V(\sigma^*) \leq V(\theta_j, a_j, {\sigma}_{-j}^*)
  \]
  for any data-player $j$ and unilateral deviation $(\theta_j, a_j)$ where $$\mathbb{E}_{(\theta_{-j}, a_{-j}) \sim \sigma_{-j}^*}[\mathcal{U}_j^{rev}(\theta_j, a_j, a_{-j}, \mathcal{D})] < \epsilon.$$
\end{defn}

Note the strict inequality on the value of the deviation: we do not show robustness to lower $V$ at deviations that are on the boundary of the $\epsilon$-best-response set at convergence. The reason is that there may be deviations which have strictly greater than $\epsilon$ regret for all $t$, but their regret converges to $\epsilon$ from above, and so they enter the set at the limit.

\begin{thm}\label{rfpconverge}
  If RFP converges to $\sigma^*$ then $\sigma^*$ is an $\epsilon$-BNE of the revelation game, and locally optimal.
\end{thm}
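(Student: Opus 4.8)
The plan is to adapt the classical fact that a convergent fictitious-play trajectory settles at an equilibrium, and to push the $V$-selection rule through the limit to obtain local optimality. Throughout I treat $\mathcal{L}^{rev}_j(\hat\theta,\hat a,\cdot)$ and $V(\hat\theta,\hat a,\cdot)$ as functions of the opponents' mixed profile, extended by expectation, and the single standing regularity fact I invoke is that both are continuous in this profile. This holds because, against a mixed profile $\sigma_{-j}$, the expected counterfactual payoff $\mathbb{E}_{\sigma_{-j}}[u^{\mathcal{G}'}_j(\cdot)]$ is linear in $\sigma_{-j}$, the inner $\max_{a_j}$ is a maximum of continuous functions and hence continuous, the $\mathcal{G}$-regret term depends only on $\hat\theta_j$ and the fixed data $\mathcal{D}$, and $\max\{\cdot,\cdot\}$ preserves continuity; $V$ is linear in $\sigma_{-j}$ by definition.

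First I would prove the $\epsilon$-BNE claim. Fix a data-player $j$ and a pair $(\hat\theta,\hat a)$ in the support of $\sigma^*_j$. Since $\bar\sigma^t_j \to \sigma^*_j$ and $(\hat\theta,\hat a)$ carries positive limiting mass, it is selected at infinitely many iterations; at each such iteration $t$ the update rule forces $(\hat\theta,\hat a)\in\hat{\mathcal{C}}^t_j$, i.e.\ $\mathcal{L}^{rev}_j(\hat\theta,\hat a,\bar\sigma^t_{-j},\mathcal{D})\le\epsilon$. As $\bar\sigma^t_{-j}\to\sigma^*_{-j}$, continuity gives $\mathcal{L}^{rev}_j(\hat\theta,\hat a,\sigma^*_{-j},\mathcal{D})\le\epsilon$ in the limit. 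Integrating this pointwise bound over the support yields $\mathbb{E}_{(\hat\theta,\hat a)\sim\sigma^*_j}[\mathcal{L}^{rev}_j(\cdot,\sigma^*_{-j})]\le\epsilon$. Since every regret is nonnegative, no deviation can attain negative loss, so the most any deviation can raise $\mathcal{U}^{rev}_j=-\mathcal{L}^{rev}_j$ above its value at $\sigma^*_j$ is $\epsilon$; this is exactly the $\epsilon$-BNE condition for $j$, and $j$ was arbitrary.

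Next I would establish local $V$-optimality for the pessimistic case $\alpha=-1$ (the optimistic case is symmetric, swapping minima for maxima). Fix $j$, a support pair $(\hat\theta,\hat a)$ of $\sigma^*_j$, and a strict $\epsilon$-best-response deviation $(\theta^\dagger,a^\dagger)$, i.e.\ $\mathbb{E}_{\sigma^*_{-j}}[\mathcal{L}^{rev}_j(\theta^\dagger,a^\dagger,\cdot)]<\epsilon$. By continuity together with $\bar\sigma^t_{-j}\to\sigma^*_{-j}$, the strict inequality guarantees $(\theta^\dagger,a^\dagger)\in\hat{\mathcal{C}}^t_j$ for all large $t$; this is precisely where strictness is needed, since a boundary deviation could approach $\epsilon$ from above and never enter the set. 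At each large iteration at which $(\hat\theta,\hat a)$ is selected, the argmin rule gives $V(\hat\theta,\hat a,\bar\sigma^t_{-j})\le V(\theta^\dagger,a^\dagger,\bar\sigma^t_{-j})$, and passing to the limit yields $V(\hat\theta,\hat a,\sigma^*_{-j})\le V(\theta^\dagger,a^\dagger,\sigma^*_{-j})$ for every support pair. Finally, because $\sigma^*$ is a product of its marginals (a limit of product empirical profiles) and $V$ is defined by expectation, $V(\sigma^*)=\mathbb{E}_{(\hat\theta,\hat a)\sim\sigma^*_j}[V(\hat\theta,\hat a,\sigma^*_{-j})]$; averaging the pointwise bound over the support gives $V(\sigma^*)\le V(\theta^\dagger,a^\dagger,\sigma^*_{-j})$, which is the desired local optimality.

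The main obstacle is making the chain ``support $\Rightarrow$ selected infinitely often $\Rightarrow$ limiting regret/value bound'' rigorous jointly with the continuity hypothesis: I must ensure the regret and evaluation functions are genuinely continuous in the empirical opponent profile (which requires care when $\mathcal{A}$ and $\Theta$ are continuous, via compactness and Berge's maximum theorem), and I must handle the standard fictitious-play gap that $j$ at iteration $t$ best-responds to the empirical $\bar\sigma^t_{-j}$ rather than to a single realized opponent action, relying on $\bar\sigma^t_{-j}\to\sigma^*_{-j}$ to close it in the limit. The strict-inequality caveat in the definition of local optimality is exactly what keeps boundary deviations from breaking the argmin comparison.
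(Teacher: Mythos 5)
Your proof is correct and rests on the same machinery as the paper's: convergence of the empirical profile, continuity of the regret and evaluation functions in the opponents' empirical distribution, and the key structural fact that $\mathcal{U}^{rev}_j \le 0$ because it is a negated maximum of regrets. Where you differ is in organization. For the $\epsilon$-BNE half, the paper runs the classical fictitious-play contradiction: it posits two support actions whose limiting payoffs differ by more than $\epsilon$, shows via an explicit $\epsilon'/2K$ approximation that the inferior one eventually falls out of the $\epsilon$-best-response set, and concludes its empirical frequency must vanish. You instead argue directly that every support action is selected infinitely often, hence has loss at most $\epsilon$ against $\bar\sigma^t_{-j}$ along a subsequence, hence loss at most $\epsilon$ against $\sigma^*_{-j}$ in the limit; since no deviation can push the loss below zero, the $\epsilon$-best-response property follows at once. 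This is a genuine simplification — it never compares pairs of support actions, and it quietly repairs a small imprecision in the paper's opening move (the negation of $\epsilon$-BNE only guarantees a profitable deviation, not that the superior action lies in the support). For local optimality the two arguments are mirror images: the paper shows a strictly-$\epsilon$-best-response deviation with strictly better $V$ would force some support action to stop being selected (contradiction), while you show directly that every support action's $V$ is dominated by the deviation's and then average over $\sigma^*_j$. Both proofs share the same unaddressed subtleties — continuity when $\Theta$ and $\mathcal{A}$ are not finite, and the fact that the joint empirical distribution of play need not be a product of its marginals, which your averaging step and the paper's use of $V(\sigma^*)$ both implicitly assume — and you at least flag these explicitly.
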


We relegate the proof to the Appendix. The argument is a fairly straightforward extension of standard fictitious play results to the revelation game. 

An important question is whether RFP can be guaranteed to converge in particular classes of Bayesian games. We leave the theoretical study of RFP (or other learning algorithms in the revelation game) to future work and focus the rest of the paper on empirical evaluation. 

\section{Experiments}
We now turn to constructing RMAC bounds for classic problems in market design including auctions, school choice, and social choice.

\subsection{RMAC in Auctions}
\begin{figure*}[ht!]
  \centering
   \includegraphics[scale=.5]{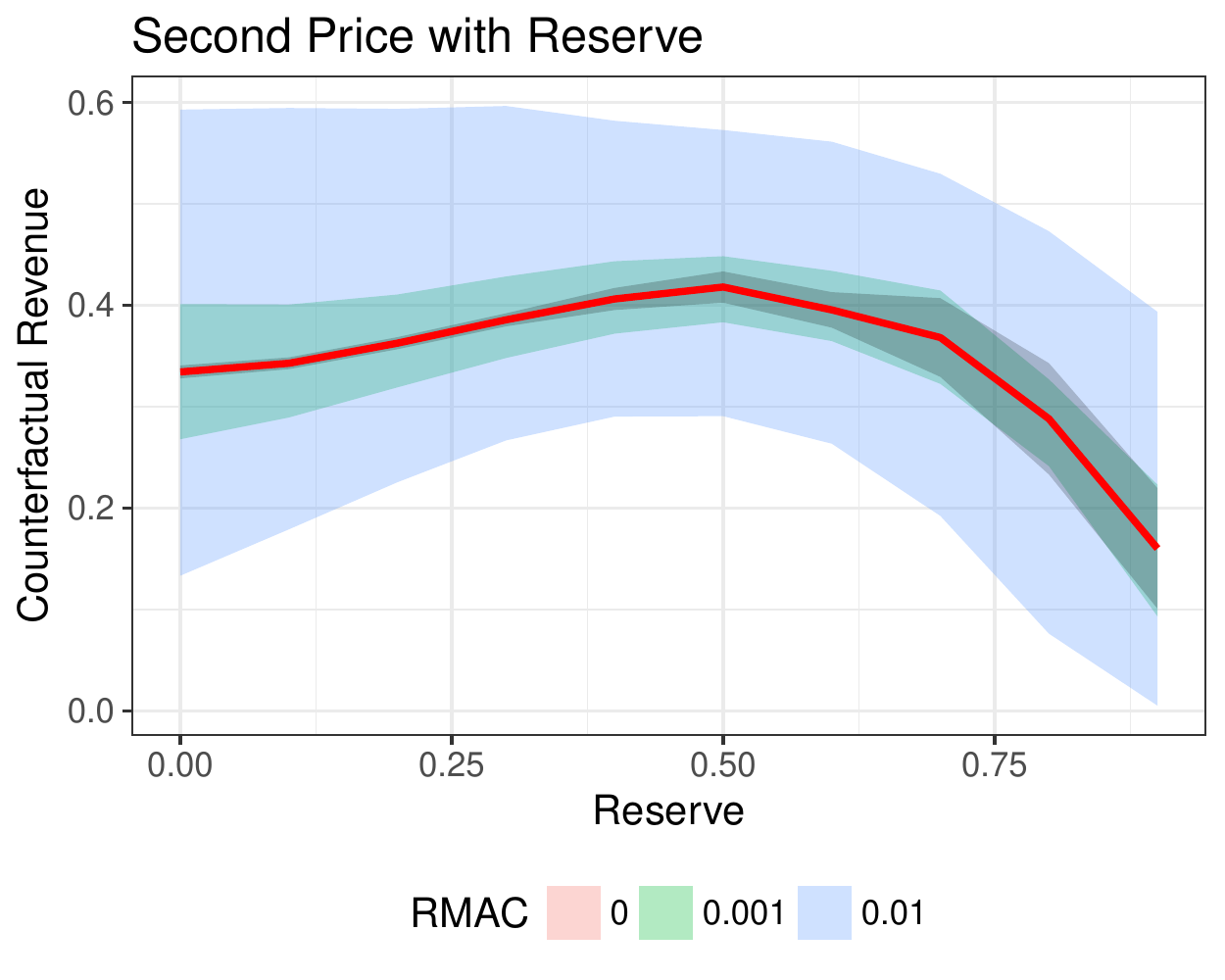}
      \includegraphics[scale=.5]{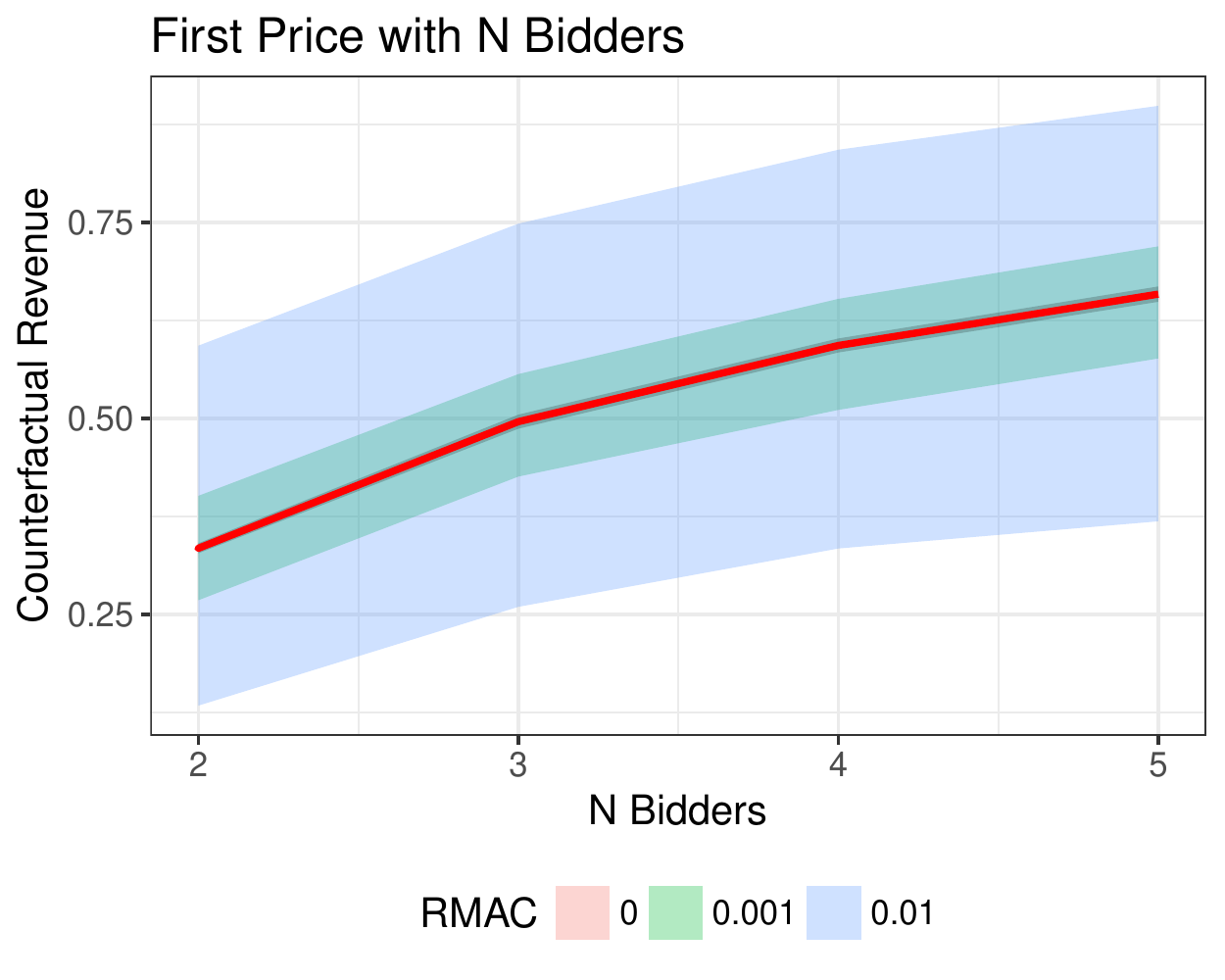} 
   \caption{RMAC revenue predictions using data drawn from the equilibrium of a first price $2$ player auction for various counterfactual auction formats. The RMAC robustness bounds, even with small $\epsilon$ are much larger than the standard error bounds (grey ribbon around RMAC $0$ line) estimated from multiple replicates.}
  \label{auctionexpt}
\end{figure*}

As our first evaluation, we will consider the study of counterfactual revenue in auctions. We set $\mathcal{G}$ as a first-price $2$-player auction with types drawn from $[0,1]$ uniformly and bids in the interval $[0,1]$ discretized at intervals of $.01$. As our counterfactual games we consider a $2$-player second-price auction with varying reserves\footnote{A reserve price $r$ in an auction is a price floor, individuals cannot win the auction if they bid below the reserve. In addition, in the case of second-price auctions, the price paid by the winner is the max of $r$ (as long as $r$ is less than the bid) and the second-highest bid.} in the interval $[0,1]$ and $N$ player first-price auctions. 

We use counterfactual expected revenue as our valuation function. We set the domain of possible types to also be equal to $[0,1].$\footnote{In our experiments we found that the choice of initial hypothesis space mattered very much, allowing a larger upper bound let some extreme types to be set fairly far above $1$. Thus, incorporating analyst priors is an important part of RMAC and the addition of other forms of regularization into the procedure that can reflect these priors is an important future research direction.}

We generate data by first sampling 1000 independent types and their actions from the closed form first-price equilibrium ($bid = .5 \theta$), using these actions as $\mathcal{D}$. We then use $\mathcal{D}$ to compute $\epsilon$-RMAC predictions for several levels of $\epsilon$. Figure \ref{auctionexpt} shows our results with (small) error bars being shown as standard deviations of the statistic over replicates. 

\begin{figure*}
  \centering
\includegraphics[scale=.5]{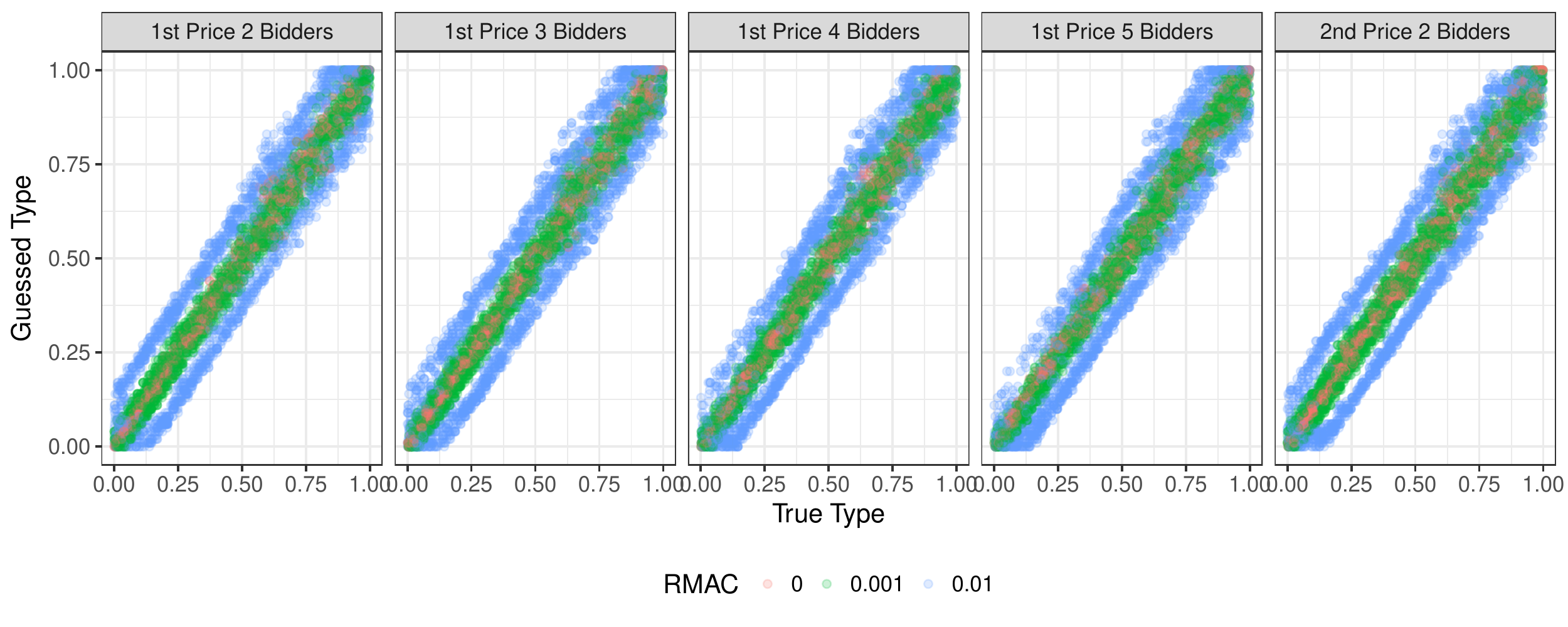} 
   \includegraphics[scale=.5]{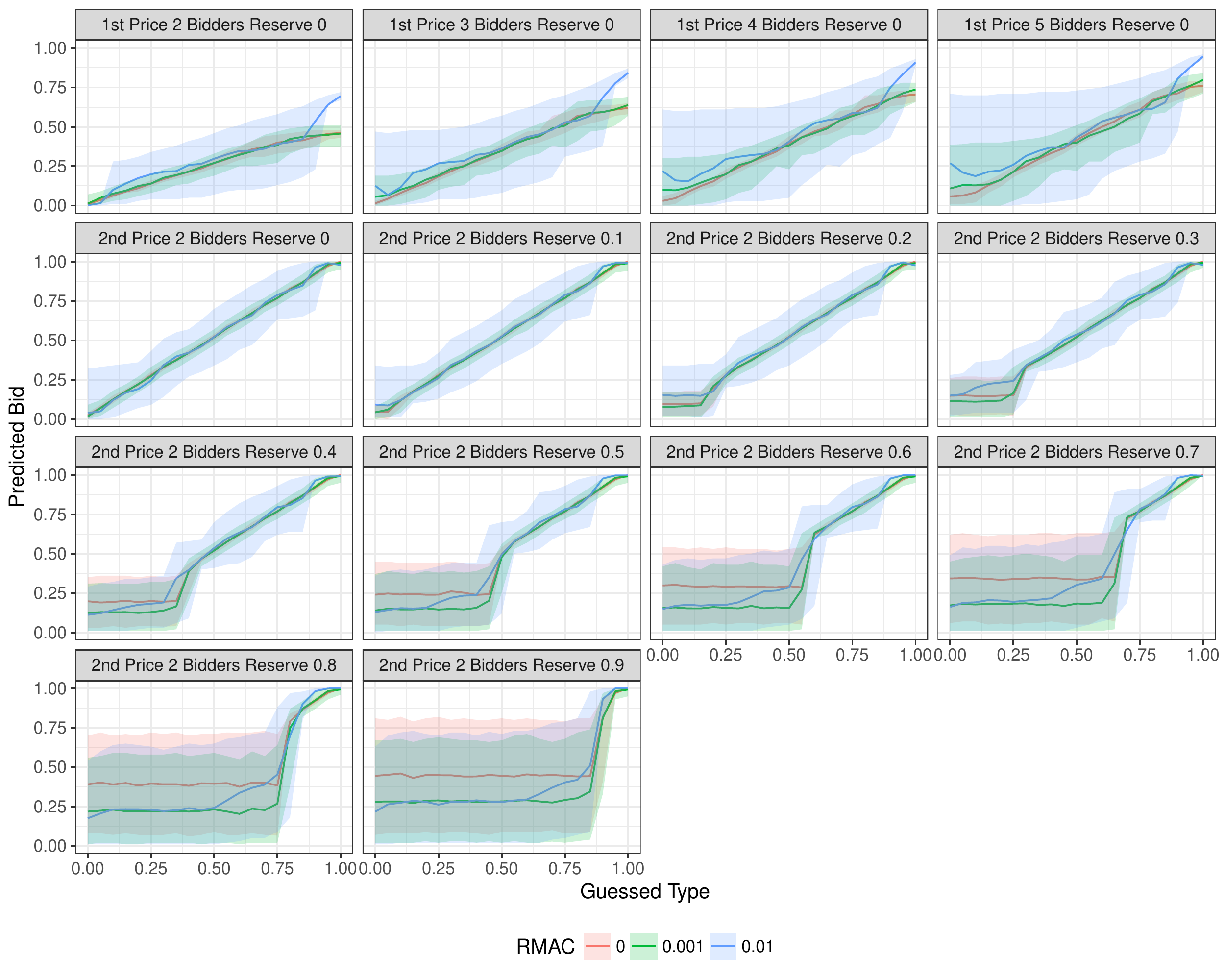}

   \caption{In depth analysis of how RMAC changes counterfactual estimates. Top panel shows estimated types and RMAC pushes the entire distribution up or down, in this special case the extent of the downward shift is not affected by the counterfactual game. This happens in auctions because in RMAC the type regret is determined by $\mathcal{G}$ and $\mathcal{D}$ and lower valuations will guarantee lower $V$ in the counterfactual game. Bottom panel shows RMAC generated counterfactual strategies for various counterfactual auctions.}
  \label{auctiontypes}
\end{figure*}

We see that in auctions, even slight changes to $\epsilon$ can lead to larger changes in revenue. In particular, if we consider that the average expected utility accrued to the winner in the $2$ player auction is $.25$, an $\epsilon$ of $.01$ corresponds only to a $4 \%$ misoptimization/misspecification. However, this small $\epsilon$ still gives quite wide revenue bounds. 

To see the logic behind this lack of robustness, consider the pessimistic estimate, in which the data is drawn from an $\epsilon$-equilibrium where individuals are overbidding in the original game and underbidding in the counterfactual game.

Assuming a uniform bid distribution, an individual's regret for (unilaterally) deviating by $\Delta$ is $\epsilon=\Delta^2/2$ in either a first- or second-price auction. However, if all individuals decrease their bid by $\Delta$ expected revenue will decrease by $\Delta$. Therefore, we expect a worst-case $\epsilon$-equilibrium in the counterfactual game to decrease revenue by $\sqrt{2\epsilon}$. In addition, there will be a similar decrease in revenue from the shift in types inferred from the original game.

The top panel of Figure \ref{auctiontypes} plots the RMAC estimated types as a function of true type and we can see that the type distribution is fairly uniformly shifted down. As a robustness check we can also see that this downward shift is not affected by the counterfactual game. This is not a general property of the RMAC estimator, and is specific to this case of auctions where revenue will be monotonic in counterfactual bid and counterfactual bid will be monotonic in type.

The worst case scenario is compounded by assumption that the equilibrium that attains in the counterfactual will be the one where these same individuals will slightly underbid. We can see the RMAC type-contingent counterfactual strategies plotted in the bottom panel of figure \ref{auctiontypes}. Error ribbons reflect $10^{th}$ and $90^{th}$ percentiles taken over multiple replicates with wide bands appearing when reserves are set high since any bid below the reserve always achieves a payoff of $0$ and so individuals are indifferent between those bids.

\subsubsection{RMAC Without Point Identification}
We now discuss how RMAC can be useful for situations where point identification of a structural model is not guaranteed. This can happen when there are multiple equilibria in $\mathcal{G}'$ or when the mapping from type distributions to equilibrium distributions in $\mathcal{G}$ is not injective. In such situations there will be multiple solutions to a maximum likelihood estimator and no guarantees about which one will be output by the procedure. On the other hand, RMAC bounds will still be well defined and if we choose a small enough $\epsilon$ will be close to the worst and best case full equilibria.

We illustrate this by considering counterfactual prediction where $\mathcal{G}$ is a $2$ player second-price auction with reserve $.5.$ with the same simulation parameters as above (as $\mathcal{D}$ we use truthful reports). $\mathcal{G}$ is dominant strategy truthful for all types $\theta > .5$ but the payoff to bids in the interval $[0, .5]$ is always $0$ so any type $\theta < .5$ can rationalize any bid in this interval. This means that the type distribution is not point identified from an action distribution. We apply RMAC to this situation with the counterfactual question of what would happen if we changed the reserve $r$.

Figure \ref{auctionnotid} shows the results. We see on the left panel that RMAC bounds for reserves $[0, .5]$ are very wide whereas bounds for reserves above the original $.5$ are smaller since our type censoring appears only on one side. 

The right panel shows that here, unlike in the auction experiments above, the choice of $\mathcal{G}'$ does affect type estimation. When the counterfactual reserve is $0$ then the pessimistic RMAC pushes previously unidentified types to $0$ to create the worst case $\mathcal{G}'$ equilibrium. When the counterfactual reserve is set very high to $.9$ low types do not bid above the reserve even in the optimistic $\epsilon$ equilibria and so types which were not identified in the original $r=.5$ game remain unidentified and their guesses are chosen arbitrarily.

\begin{figure*}[ht!]
  \centering
   \includegraphics[scale=.5]{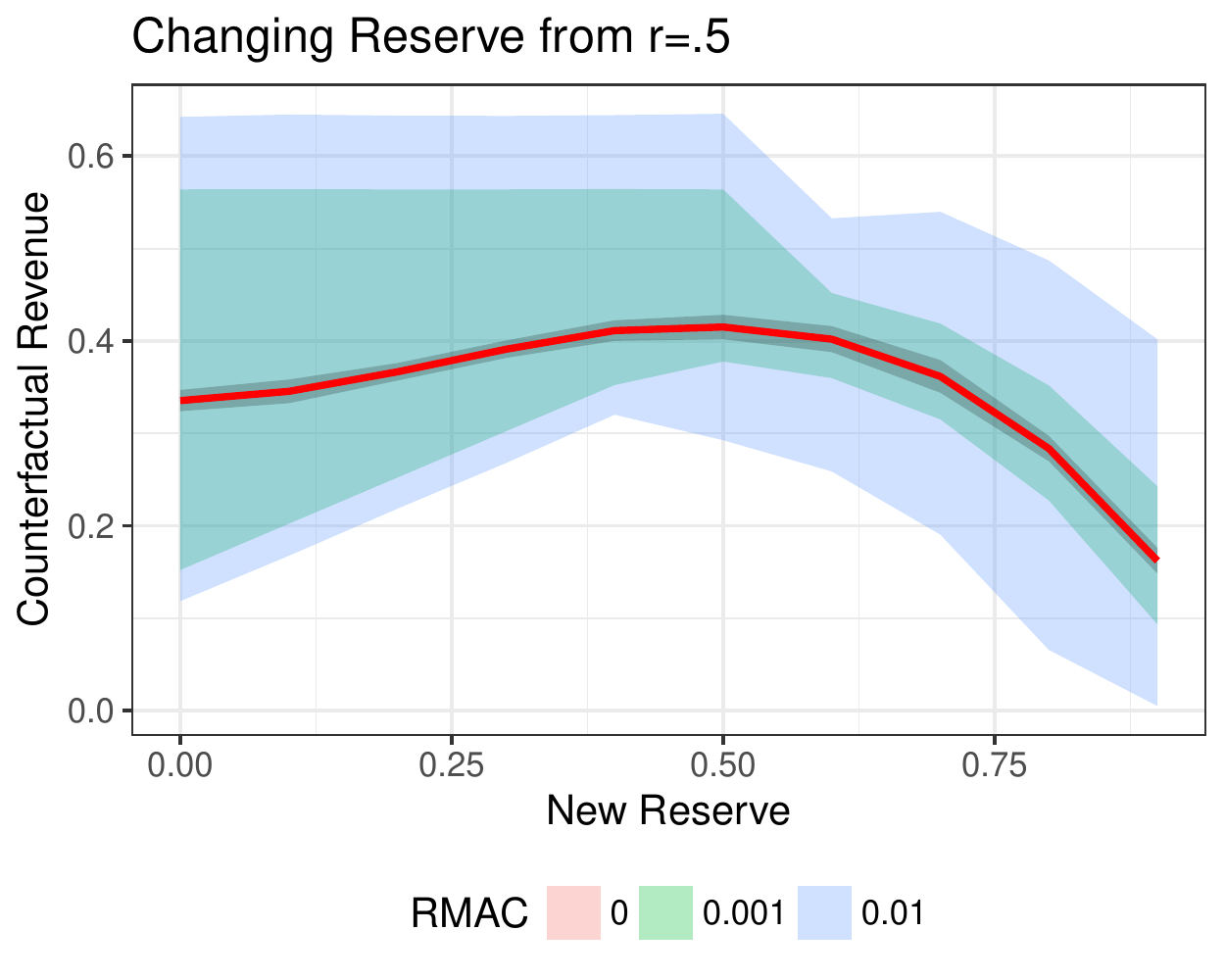}
      \includegraphics[scale=.5]{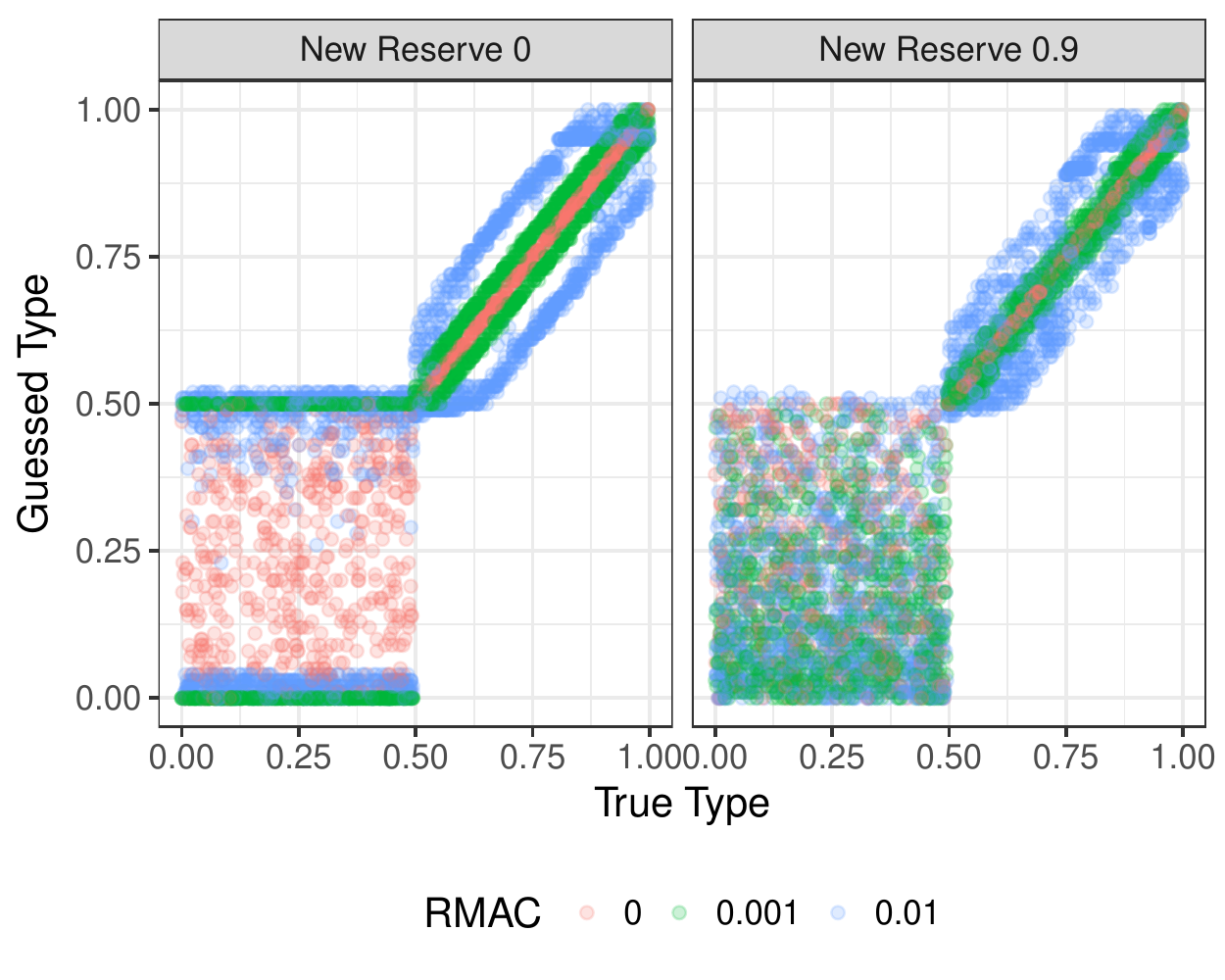} 
   \caption{Results for data drawn from a second-price auction with reserve $.5$ with counterfactual question involving changing the reserve. RMAC is well defined even when the inverse problem is not identified due to multiple types being consistent with the same observed actions. The maximum likelihood solution (red line) simply picks a random type from among all equally likely ones. RMAC bounds reflect the lack of identification in the original game as they are quite large for counterfactual reserves less than the original reserve. In the right panel we see that in this situation, unlike in the example above, the choice of counterfactual game $\mathcal{G}'$ does affect the estimated underlying types.}
  \label{auctionnotid}
\end{figure*}

\subsection{RMAC in School Choice}
We move to another commonly studied domain: school choice. Here the problem is to assign items (schools) to agents (students). Agents have preferences over schools, report them, and the output of the mechanism is an assignment.

We look at two real world school choice mechanisms. The first is the Boston mechanism \citep{abdulkadirouglu2005boston}. In Boston each student reports their rank order list and the mechanism tries to maximize the number of first choice assignments that it can. Once it has done this, it tries to maximize the number of second-choice assignments, and so on. The second mechanism uses the random serial dictatorship (RSD) mechanism \citep{abdulkadirouglu1998random}. Here students are each given a random number and sorted, the first in line gets to choose their favorite school, the second chooses their favorite among what's left and so on. 

\begin{figure*}[ht!]
  \centering
   \includegraphics[scale=.5]{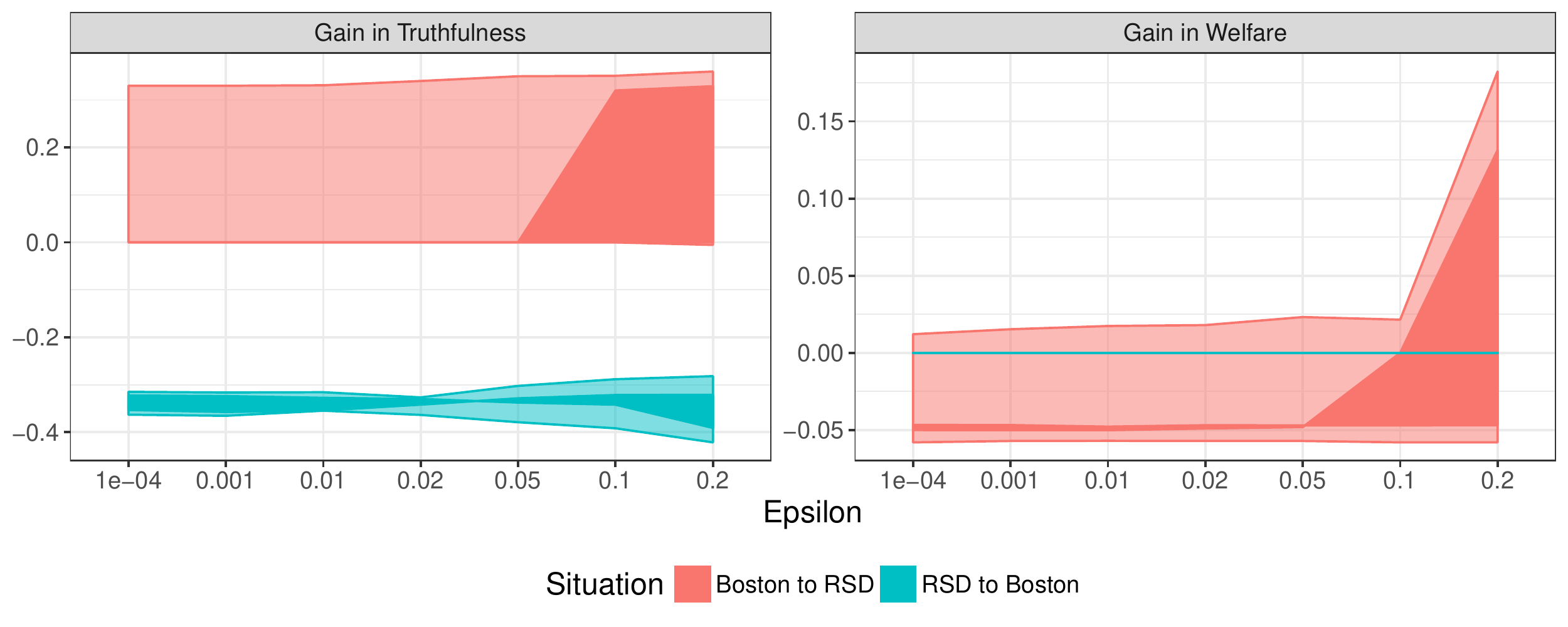}
   \caption{RMAC intervals for the change in social welfare and change in truthfulness from changing school choice mechanisms. Dark and light curves are for $10^{\rm th}$ and $90^{\rm th}$ percentile of estimated intervals over replicates with different sampled $\mathcal{D}$. The presence of multiple type distributions consistent with a given action distribution in Boston means that even for small $\epsilon$ RMAC bounds can be quite wide for Boston to RSD.}
  \label{schoolexpt}
\end{figure*}

The main tradeoff in practical school choice comes from balancing the total social welfare achieved by the mechanism and the strategy proofness. RSD (and other algorithms like student-proposing deferred acceptance) have a dominant strategy for each agent to report their true type. This means that participants in real world implementations of such mechanisms do not need to spend cognitive effort on guessing what others might do or searching for information - they can simply tell the truth and go on with their day. On the other hand, equilibria of the Boston mechanism can be more efficient in terms of allocating schools to students but are not strategy proof \citep{mennle2015trade,abdulkadirouglu2011resolving}. 

We consider a problem mechanism with 3 students and 3 schools $(A, B, C)$. For both mechanisms the action space is a permutation over $A, B, C$. 

We consider a hypothesis space of types that are permutations of utility vector $(5, 4, 0)$ - that is, individuals receive utility $5$ if they get their first choice, $4$ for the second and $0$ for the third. We are going to consider the case where all individuals have identical preferences of $A > B > C.$ We will take these types, generate an equilibrium under Boston and construct a dataset and ask the counterfactual question: what would happen if we changed to RSD? We will also generate actions from the RSD equilibrium and ask: what would happen if we changed to Boston?

We look at two choices of $V$ inspired by discussion of these mechanisms: overall social welfare of the allocation and truthfulness of the strategies (i.e. whether types report their true values). We plot the estimated change in welfare and truthfulness from moving from one mechanism to another. In other words, we perform the kind of exercise that a practicing market designer might actually do to when trying to convince a school district to change mechanisms. 

Note that in the case of `Boston to RSD' at $\epsilon=0$ the standard structural assumptions are not satisfied, as multiple type distributions are consistent with the observed actions. Given our utility space, even though everyone has the same preferences, same types may choose different actions (i.e. play a mixed strategy), since it is better to be assured of getting $B$ than take a lottery between $A$, $B$ and $C.$ So, some proportion of individuals will report $(B, A, C)$ However, such an action profile is also consistent with an equilibrium of truthful types with different preferences. Since the types are not identified from the observed actions, structural estimation using maximum likelihood has multiple optima with different values of $V$. However, RMAC with small $\epsilon$ will produce an interval that covers both possible type distributions.

Figure \ref{schoolexpt} shows that going from Boston to RSD can create more truthfulness in the best case but in the pessimistic case has no effect (because actions were already truthful). This transition also tends to lead to welfare decreases, although not always. For example, if all players have identical preferences, all mechanisms provide the same welfare. Moving from RSD to Boston decreases truthfulness always and does not change welfare in our situation (since in our simulations all students have same true preferences).  


\subsection{RMAC in Social Choice}
As our last study we move to the domain of social choice. We consider the standard example of a group of individuals choosing an ideal point $x^* \in [0,1].$ We assume individuals have a type $\theta \in [0,1]$ and have single-peaked preferences and receive loss $(x^* - \theta)^2$ from a point $x^*$ being chosen for the group. We consider groups of 11 individuals participating in one of $3$ mechanisms: in each mechanism individuals report a number $a \in [0,1]$. In the \textit{mean} mechanism, $x^*$ is chosen as the mean of the reports, in the \textit{median} mechanism the median is chosen. In both the median and mean mechanism no side payments are made. 

\begin{figure*}[ht!]
  \centering
   \includegraphics[scale=.5]{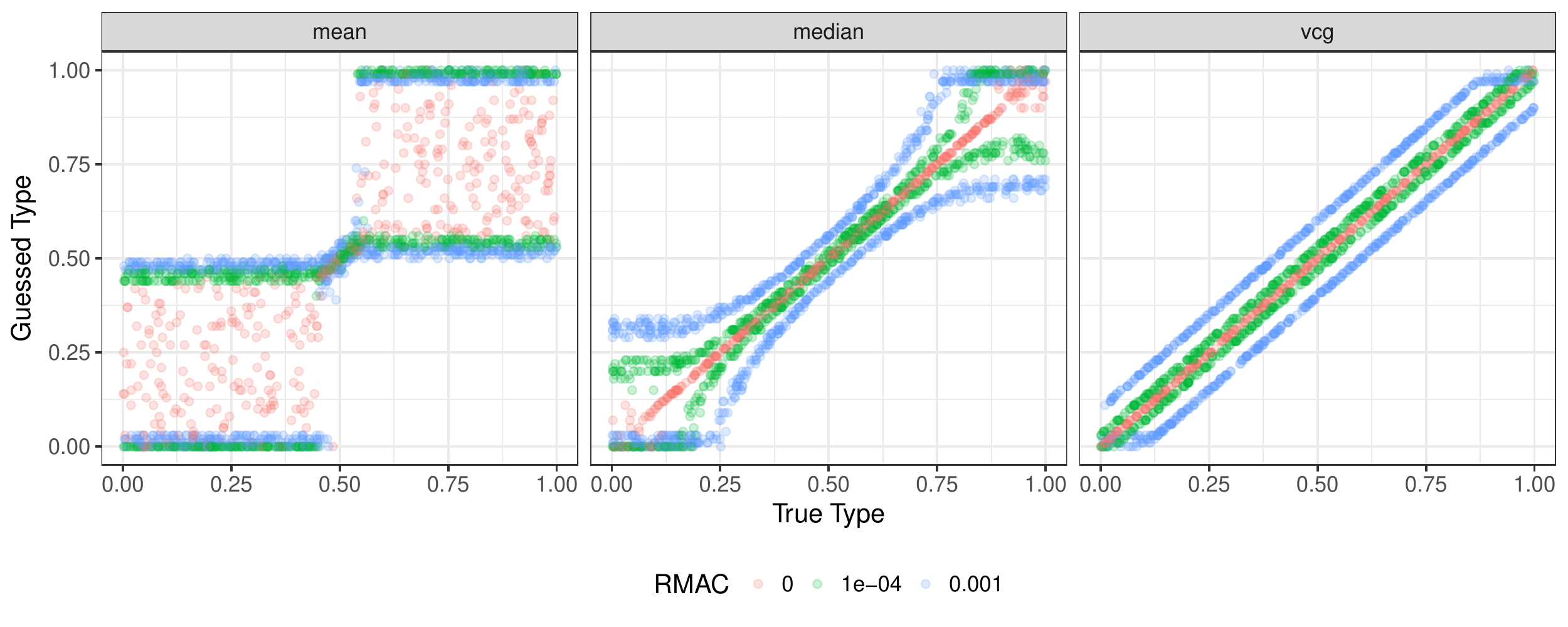}
         \caption{The median and VCG mechanisms are dominant strategy truthful but they have very different robustness properties. The mean mechanism is not well identified as types outside a narrow interval all report extreme values in equilibrium, however, RMAC bounds are defined for this case as well.}
  \label{votingexpt}
\end{figure*}

In the \textit{VCG} mechanism individuals pay the mechanism their externality on everyone else (i.e. the difference in total utility from choosing the mean that includes the report of $a_i$ and the one that excludes it) and the mean is chosen. As in auctions we discretize the types and actions with a grid of $.01$. We sample $1000$ types, calculate their optimal action in the mechanism, and run the revelation game. For the counterfactual valuation we use $V(\hat{\theta}) = \sum_{i} \theta.$ That is, we look for the most right or left shifted type distributions that are consistent with observed data.

Figure \ref{votingexpt} shows our results. First, we can see that with $\epsilon=0$ the mean mechanism is not identified since in equilibrium a whole range types choose the $0$ and $1$ actions. However, even with small $\mid \epsilon \mid$ the solution becomes unique. 

Even though both the median and VCG mechanisms are dominant strategy truthful they have very different robustness properties. In the median mechanism deviation from truthful reporting, in particular for extreme types, is not very costly as it can only affect outcomes if that person is pivotal. On the other hand, in the VCG any deviation also changes the price one has to pay into the mechanism, thus changing the way types can deviate under RMAC.

\section{Conclusion}
Structural estimation is an important area of counterfactual prediction. We have introduced RMAC as a way of dealing with situations where the standard structural assumptions of specification, equilibrium, point identification are not met. We have use the revelation game as an estimator for counterfactual quantities and adapted standard fictitious play to solve for pessimistic and optimistic equilibria of the revelation game. 

There are many possible extensions to our work. We have assumed that deviations from optimal behavior can be arbitrary but must incur low regret. It is well known that in many situations deviations from rational behavior are not random but rather systematic and can even incur large regret \citep{camerer2011advances}. An important extension of our work is to incorporate theoretical models from behavioral economics into RMAC predictions \citep{fudenberg2006dual,ragain2016pairwise,peysakhovich2017learning,peysakhovich2018reinforcement}.

Our method applies a vanilla version of fictitious play but it is well known that modifications to standard learning algorithms can lead to large changes in real world performance, especially in multi-agent settings \citep{conitzer2007awesome,syrgkanis2015fast,kroer2015faster}. Thus, it is worth exploring the use of other algorithms than RFP to solve for RMAC bounds. In addition, we assume access to a tabular and discrete representation of the counterfactual game and an important future direction is to expand these ideas to more complex multi-agent environments, for example those including multiple steps and planning \citep{shu2018m}. Such extensions would naturally require multi-agent learning algorithms that can handle function approximation such as those based on deep learning \citep{heinrich2016deep,dutting2017optimal,lowe2017multi,feng2018deep,brown2018deep}. 

We have looked at predicting counterfactual behavior in the kinds of environments studied by market designers. However, the general question of how other agents would respond to something (e.g. a behavior or change in environment) is an important problem for agent design and in particular learning whether a particular partner (or partners) are attempting to cooperate or compete \citep{littman2001friend,kleiman2016coordinate,lerer2017maintaining,shum2019theory}. Expanding RMAC to such situations is another important future direction.

\bibliography{20190325_rlmech.bbl}
\bibliographystyle{ACM-Reference-Format}

\section{Appendix}

\subsection{A Mathematical Program for the General Revelation Game}
We now present a mathematical program for solving the revelation game exactly for small instances. Throughout we will treat $V$ as a black box, assumed to be representable in the same class as the mathematical program it is stated within. Similarly we will assume that the $\text{Regret}$ functions are representable within the given class. If these assumptions are not true then the problem will of course be harder than the stated class of mathematical programs. 

Throughout the section we will abuse notation slightly in the name of readability and say that $u(a,a_{-j},\theta_j) = \mathbb{E}_{\tilde a \sim a_{-j}}[u^{\mathcal{G}'}(a,\tilde a,\theta_j)]$, i.e. the expected utility of action $a_j$ given the distribution over actions taken by other players in $\mathcal{G}'$ given the action assignment of the data-players.

First, we give a mathematical program for solving the general case of the revelation game. Here we let $a$ and $\theta$ be vectors of action and type choices, since this formulation is guaranteed to have a pure-strategy BNE:

\begin{equation}
\begin{array}{rrclll}
  \min_{\theta, a} & V(\theta, a) & \\
  \textrm{s.t.} & \max_{a \in \mathcal{A}} u(a,a_{-j},\theta_j) - u(a_j,a_{-j},\theta_j) & \leq & \epsilon   & \forall j \in \mathcal{D}  \\
  & \text{Regret}^{\mathcal{G}}_j(\theta_j) &\leq& \epsilon & \forall j \in \mathcal{D} \\
  & \multicolumn{3}{c}{\theta_j \in \Theta, a_j \in \mathcal{A}} & \forall j\in \mathcal{D}, a\in \mathcal{A}, \theta \in \Theta
\end{array}
\label{mpec}
\end{equation}

The first constraint in \eqref{mpec} is an equilibrium constraint over $\mathcal{G}'$, and therefore the general problem is a \emph{mathematical program with equilibrium constraints} (MPEC). Thus the general program is quite hard. If we make the assumptions that $\mathcal{A}$ and $\Theta$ are nonempty convex sets, and each $u(\cdot, a_{-j}|\theta_j)$ is a concave function in the choice of action $a_j$ then we can formulate the problem as a variational inequality problem:
\begin{equation}
\begin{array}{rrclll}
  \min_{\theta, a} & V(\theta, a) & \\
  \textrm{s.t.} & \langle a' - a, F(a,\theta) \rangle & \leq & \epsilon  &   \\
  & \text{Regret}^{\mathcal{G}}_j(\theta_j) &\leq& \epsilon & \forall j \in \mathcal{D} \\
  & \multicolumn{3}{c}{\theta_j \in \Theta, a_j \in \mathcal{A}} & \forall j\in \mathcal{D}, a\in \mathcal{A}, \theta \in \Theta
\end{array}
\label{mpvi}
\end{equation}
where $F(a,\theta_j)_j = \nabla_{a_j} u(a,\theta_j)$ is the gradient operator of $u$ for the given choice of $\theta$.

\subsection{A Mixed Integer Program for Two Player $\mathcal{G}'$}
Next, we give a mixed integer program (MIP) for the special case where $\mathcal{G}'$ has only two players, but where we may have an arbitrary finite number of data points. Furthermore, for this MIP we assume that $\Theta$ is discrete and finite, as well as $\mathcal{A}$ is finite.

The program has a Boolean variable $T_j^\theta$ for each pair of data point $j$ and type $\theta$, indicating whether data point $j$ takes on type $\theta$. For each data point $j$ and action $a$ we have $\sigma_j(a)\in [0,1]$ indicating the probability that $j$ puts on $a$ (we could make $\sigma_j(a)$ Boolean instead in order to compute a pure-strategy solution, but pure-strategy solutions are not guaranteed to exist when types are discrete).

We also have the following $\epsilon$-BNE-enforcing variables: $v_\theta$ represents the utility achieved by type $\theta$ in $\mathcal{G}'(T)$ under the computed solution, the slack variable $s_{\theta,a}$ denotes the \emph{inoptimality} of $a$ when taken by type $\theta$, and $\delta_{\theta,a}$ is an indicator variable denoting whether $a$ is played by any data-player taking type $\theta$. The idea of the MIP is to ensure $s_{\theta,a}\leq \epsilon$, i.e. that inoptimality is bounded by $\epsilon$, whenever any data-player  chooses type $\theta$ and puts nonzero probability on $a$. 
\begin{equation}
\begin{array}{rrclll}
  \min_{T_j^\theta,  \sigma(a), v_\theta, s_{\theta,a}, \delta_{\theta,a}} & V(T, \sigma(a)) & \\
  \textrm{s.t.} &  s_{\theta,a} -  M\delta_{\theta,a}& \leq & \epsilon  & \forall a \in \mathcal{A},\theta \in \Theta  \\
  & \sigma_j(a) + T_j^\theta - \delta(\theta,a) &\leq& 1 & \forall j \in N, a \in \mathcal{A}, \theta \in \Theta\\
  &\sum_{j',a'} \sigma_{j'}(a') u(a,a',\theta) + s_{\theta,a} & = & v_\theta & \forall j \in N, a \in \mathcal{A}, \theta \in \Theta\\
  &\sum_{a\in \mathcal{A}} \sigma_j(a) & = & 1 & \forall j \in N\\
  &\sum_{\theta \in \Theta} T_j^\theta & = & 1 & \forall j \in N\\
  & \multicolumn{3}{c}{T_j^\theta, \delta_{\theta,a} \in \{0,1\}, \sigma_j(a), s_{\theta,a}\geq 0} & \forall j\in N, a\in \mathcal{A}, \theta \in \Theta
\end{array}
\end{equation}
Note that since $\Theta$ is finite we can preprocess it and remove all $\theta$ such that $\mathcal{R}^G_j(\theta)> \epsilon$, and thus we do not need to enforce this constraint on $T_j^\theta$ in the MIP.

\subsection{Proofs of Theorems}
\begin{proof}[Proof of Theorem \ref{revgame}]
The proof relies heavily on the fact that the revelation game's utility function is defined with respect to \textit{regret} \textbf{not} the \textit{original utility function}. Suppose that a data-player has true type $\theta_j$ but reports $\theta'_j.$ In revelation-game BNE this $\theta_j'$ must have zero regret. But this violates the identification assumption, since we could then construct a new distribution $\mathcal{F}'$ where we reassign type $\theta_j$ to $\theta_j'$ but keep the same distribution over actions in $\mathcal{G}$ as part of a BNE. Thus the reported distribution over types must be $\mathcal{F}$ in revelation-game BNE. Now we can use the uniqueness assumption to infer that each data-player reports their true type, as well as their action in the unique BNE of $\mathcal{G}'$ given distribution $\mathcal{F}$. If they report any other action they must have nonzero regret, or they would violate the uniqueness assumption.
\end{proof}

\begin{proof}[Proof of Theorem \ref{rfpishard}]
The first statement is by reduction from max-social-welfare Nash equilibrium in some game $G^{SW}$, which is NP-hard~\citep{conitzer2008new}. We set $\epsilon = 0$, and $V(\theta,a)$ equal to the negative social welfare in $G^{SW}$ of actions $a$. For each agent in the NE problem we instantiate a data point $d_i$ and create the game $G$ such that each $i$ can only take on the type corresponding to their payoffs in $G^{SW}$ (this is easily done by making every other type have non-zero regret in $G$). Now we set $G' = G^{SW}$. A solution to the RMAC problem now corresponds to a social-welfare maximizing Nash equilibrium of $G^{SW}$.

The second statement is by reduction from the problem of checking whether a pure-strategy BNE exists, which is NP-complete~\citep{conitzer2008new}. Consider a symmetric game $G^{pure}$ that we wish to find a pure-strategy BNE for. We let $G'=G^{pure}$. For each type $\theta$ of $G^{pure}$ we instantiate a data point such that only $\theta$ is a feasible type. Now the distribution over types in $G'$ equals that of $G^{pure}$, and so the equilibria are in correspondence.
\end{proof}

\begin{proof}[Proof of Theorem \ref{rfpconverge}]
First we show that the limit $\sigma^*$ is an $\epsilon$-BNE. Let $(\bar \theta, \bar a)$ denote a sequence of play in question. Denote by $\bar{\sigma}^t_j$ the strategy of player $j$ implied by the history $(\bar \theta, \bar a)$ up to time $t$.  
  
Suppose $\sigma^*$ is not an $\epsilon-$BNE. Then there exists data player $j$ and revelation game actions $(\theta_j,a_j)$ and $(\theta_j',a_j')$ that are both in the support of $\sigma^*$ but have the following payoff difference $$\mathcal{U}_j^{rev}(\theta_j, a_j, a_{-j}^*,\mathcal{D}) - \mathcal{U}_j^{rev}(\theta_j',a_j', a_{-j}^*,\mathcal{D}) > \epsilon + \epsilon'$$ for some $\epsilon'>0$. 
  
Now pick $T$ such that for all $t \geq T$ we have $$|\bar \sigma_j^t - \sigma^*_j | \max_{\theta_j, a_j, a_{-j}} \mathcal{U}_j^{rev}(\theta_j, a_j, a_{-j},\mathcal{D}) \leq \frac{\epsilon'}{2K}$$ where $K$ is the number of pure strategy profiles. Such a $T$ exists since by assumption $(\bar \theta, \bar a)$ converges and $\mathcal{U}_j^{rev}$ is bounded. We then have
  \begin{align*}
    \mathbb{E}[\mathcal{U}_j^{rev}(\theta_j', a_j', \bar a_{-j}^t,\mathcal{D})]
    &= \sum_{(\theta_{-j},a_{-j})} \mathcal{U}_j^{rev}(\theta_j', a_j', a_{-j},\mathcal{D}) \bar \sigma^t (\theta_{-j},a_{-j}) \\
    &\leq \sum_{(\theta_{-j},a_{-j})} \bigg[ \mathcal{U}_j^{rev}(\theta_j', a_j', a_{-j},\mathcal{D}) \bar \sigma^t(\theta_{-j},a_{-j}) + \frac{\epsilon'}{2K} \bigg] \\
    &\leq \sum_{(\theta_{-j},a_{-j})} \mathcal{U}_j^{rev}(\theta_j', a_j', a_{-j},\mathcal{D}) \bar \sigma^t (\theta_{-j},a_{-j}) + \frac{\epsilon'}{2} \\
    &< \sum_{(\theta_{-j},a_{-j})} \mathcal{U}_j^{rev}(\theta_j,a_j,a_{-j},\mathcal{D}) \bar \sigma^t (\theta_{-j},a_{-j}) - \frac{\epsilon'}{2} - \epsilon\\
    &\leq \sum_{(\theta_{-j},a_{-j})} \bigg[ \mathcal{U}_j^{rev}(\theta_j,a_j,a_{-j},\mathcal{D}) \bar \sigma^t (\theta_{-j},a_{-j}) + \frac{\epsilon'}{2K}\bigg] - \frac{\epsilon'}{2} - \epsilon\\
    &\leq \sum_{(\theta_{-j},a_{-j})} \mathcal{U}_j^{rev}(\theta_j,a_j,a_{-j},\mathcal{D}) \bar \sigma^t (\theta_{-j},a_{-j}) - \epsilon\\
    &= \mathbb{E}[\mathcal{U}_j^{rev}(\theta_j, a_j, \bar a_{-j}^t,\mathcal{D})] - \epsilon
  \end{align*}
Thus we have that after iteration $T$ we no longer select $(\theta_j',a_j')$ since it is not within the set of $\epsilon$ best responses. This follow from the above algebra and the fact that $\mathcal{U}_j^{rev}$ is bounded above by zero (since it is the negative maximum regret).

But this implies that thus $$\lim_{t \to \infty} \bar \sigma^t (\theta_{-j},a_{-j}) \rightarrow 0.$$ But this is a contradiction since we assumed $\sigma^* (\theta'_{j}, a'_{j}) > 0.$

Now we prove local optimality. 

Suppose we do not have local optimality. Then there exists $j$ and revelation game actions $(\theta_j, a_j)$ such that $$\mathbb{E}[\mathcal{U}_j^{rev}(\theta_j, a_j, a^*_{-j},\mathcal{D})] + \epsilon' < \epsilon$$ and $$V(\theta_j, a_j, \sigma^*_{-j}) + \epsilon' < V(\sigma^*)$$ for some $\epsilon'>0$.  

Since the expected value of $V$ is continuous in the empirical distribution there exists $(\theta_j', a_j')$ with $\sigma^* (\theta_j',a_j')>0$ such that $V(\theta_j, \bar\theta_{-j}, a_j, \bar a_{-j}^t) + \epsilon'' < V(\theta_j',\bar\theta_{-j},a_j',\bar a_{-j}^t)$ for all $t \geq T'$ for some sufficiently large $T'$.
 
Now pick $T\geq T'$ such that for all $t\geq T$, $(\theta_j,a_j)$ is in the $\epsilon$-best-response set to $\sigma^*_{-j}$ for $j$. Such a $T$ is guaranteed to exist by continuity of $V$ and $\mathcal{U}_j^{rev}$ in the empirical distribution. But then best responses never select $(\theta_j', a_j')$ after $T$ which is a contradiction.
\end{proof}

\end{document}